\documentclass[a4paper, authorcolumns]{lipics-v2021}
\usepackage{mathtools}
\usepackage{cite}
\usepackage{float}
\usepackage{graphicx}
\nolinenumbers

\graphicspath{ {./images/} }

\newcommand{\ST}{\,:\,}                 

\newcommand{\bdOmega}{\partial \kern+1pt \Omega} 

\newcommand{\SP}{\kern+1pt}             
\newcommand{\revFunk}[1]{{}^r \kern-1pt F_{#1}}
\DeclareMathOperator{\interior}{int}

\usepackage{xcolor}

\title{Visualizing Higher Order Structures, Overlap Regions, and Clustering in the Hilbert Geometry.}
\titlerunning{\small Visualizing Higher Order Structures, Overlap Regions, and Clustering in the Hilbert Geometry.}

\author{Hridhaan Banerjee}{ Thomas Jefferson High School for Science and Technology, Virginia, USA \and \url{~}}{hridhaan.s.banerjee@gmail.com}{}{}

\author{Soren Brown}{Department of Computer Science, University of Maryland, College Park, USA  \and \url{~}}{sorenkbrown@gmail.com}{}{}

\author{June Cagan}{Department of Computer Science, University of Maryland, College Park, USA  \and \url{~}}{jcagan@terpmail.umd.edu}{}{}

\author{Auguste H. Gezalyan}{Université de Lorraine, CNRS, Inria, LORIA, F-54000 Nancy, France \and \url{~}}{octavo@umd.edu}{https://orcid.org/0000-0002-5704-312X}{}

\author{Megan Hunleth}{Montgomery Blair High School, Silver Spring, Maryland, USA  \and \url{~}}{megan@hunleth.com}{}{}

\author{Veena Kailad}{Montgomery Blair High School, Silver Spring, Maryland, USA \and \url{~}}{veena.s.kailad@gmail.com}{}{}

\author{Chaewoon Kyoung}{Montgomery Blair High School, Silver Spring, Maryland, USA \and \url{~}}{kchwoony210@gmail.com}{}{}

\author{Rowan Shigeno}{Department of Mathematics, Haverford College, Pennsylvania, USA\and \url{~}}{rowanshigeno@gmail.com}{}{}

\author{Yasmine Tajeddin}{Department of Computer Science, University of Maryland, College Park, USA \and \url{~}}{Yasminet@umd.edu}{}{}

\author{Andrew Wagger}{Department of Computer Science, University of Maryland, College Park, USA \and \url{https://andrew.wagger.net}}{andwagger@gmail.com}{}{}

\author{Kelin Zhu}{Department of Mathematics, University of Maryland, College Park, USA \and \url{~}}{kelinzhu@terpmail.umd.edu}{}{}

\author{David M. Mount}{Department of Computer Science, University of Maryland, College Park, Maryland, USA \and \url{https://www.cs.umd.edu/~mount/}}{mount@umd.edu}{https://orcid.org/0000-0002-3290-8932}{}

\authorrunning{\small Banerjee, Brown, Cagan, Gezalyan, Hunleth, Kailad, Kyoung, Shigeno, Tajeddin, Wagger, Zhu}

\Copyright{Banerjee, Brown, Cagan, Gezalyan, Hunleth, Kailad, Kyoung, Shigeno, Tajeddin, Wagger, Zhu}

\ccsdesc[500]{Theory of computation~Computational geometry}
\keywords{Hilbert metric, Funk metric, Voronoi diagrams}

\date{\today}

\EventEditors{Hee-Kap Ahn, Michael Hoffmann, and Amir Nayyeri}
\EventNoEds{3}
\EventLongTitle{42nd International Symposium on Computational Geometry (SoCG 2026)}
\EventShortTitle{SoCG 2026}
\EventAcronym{SoCG}
\EventYear{2026}
\EventDate{June 2--5, 2026}
\EventLocation{New Brunswick, NJ, USA}
\EventLogo{socg-logo}
\def\EventLogoHeight{42}
\SeriesVolume{367}
\ArticleNo{108}  

\category{Media Exposition}

\funding{REU-CAAR 2025}
\begin{document}

\supplementdetails[subcategory={Software}]{Software}{https://andrew.wagger.net/hilbert/}
\supplementdetails[subcategory={Video}]{Media}{https://youtu.be/yhJn--3Qgks}

\maketitle

\begin{abstract}
Higher-order Voronoi diagrams and Delaunay mosaics in polygonal metrics have only recently been studied, yet no tools exist for visualizing them. We introduce a tool that fills this gap, providing dynamic interactive software for visualizing higher-order Voronoi diagrams and Delaunay mosaics along with clustering and tools for exploring overlap and outer regions in the Hilbert polygonal metric. We prove that $k^{th}$ order Voronoi cells are not always star-shaped and establish complexity bounds for our algorithm, which generates all order Voronoi diagrams at once. Our software unifies and extends previous tools for visualizing the Hilbert, Funk, and Thompson geometries. 
\end{abstract}

\section{Introduction}
The Hilbert metric has attracted recent interest due to its applications in clustering \cite{nielsen2019clustering}, non-linear embeddings \cite{nielsen2023non}, convex approximation \cite{abdelkader2018delone, abdelkader2024convex}, and real analysis \cite{lemmens2014birkhoff}. As such, there has been work on extending results from computational geometry to the Hilbert metric, such as Voronoi diagrams (including farthest-point Voronoi)\cite{gezalyan2023voronoi,bumpus2023software, song2025farthest}, minimum enclosing balls \cite{banerjee2024heine}, and Delaunay triangulations \cite{gezalyan2023delaunay}. To aid in these endeavors, software to help visualize the Hilbert metric has been progressively developed. This began with a Hilbert ball visualizer \cite{nielsen2017balls}, then some Voronoi demo software \cite{bumpus2023software}, which evolved into software for the Funk and Thomson polygonal geometry \cite{banerjeesoftware2025}. We build on this foundation by providing the first software with interactive support for dynamically visualizing higher-order Voronoi diagrams and Delaunay mosaics in the Hilbert metric, exploring overlap and outer regions (defined in Section 3), and step-by-step clustering on arbitrary convex polygonal domains. The software is implemented in JavaScript, it runs interactively in the browser, and it includes a tutorial video and brute force Voronoi mode for verification. It also maintains the features from all previous software.

\section{Higher order structures}
The study of $k^{th}$ order abstract Voronoi diagrams started with ``On the complexity of higher order abstract Voronoi diagrams'' where cells were determined by the $k$ closest sites \cite{BOHLER2015539}. The Hilbert metric does not immediately fall into the category of $k^{th}$ order abstract Voronoi diagrams, but can be made to fit it by extending bisectors to infinity without the extensions crossing. As such, there exist frameworks for efficient $k^{th}$ order Voronoi diagrams \cite{bohler2019efficient,BOHLER2015539}. Our visualization generates all orders at once using an edge labeling technique similar to those used in Euclidean geometry \cite{claverol2024edge}.  For completeness, we begin this section by proving two supporting lemmas that show our algorithm works. Then we present the algorithm. Throughout, we assume that we have been given some convex polygonal region $\Omega$ with $m$ sides and a set of $n$ sites $S$. Both proofs use a classical technique using properties of balls \cite{lee1982k}.

\begin{figure}[h]
\centering
\includegraphics[scale=.55]{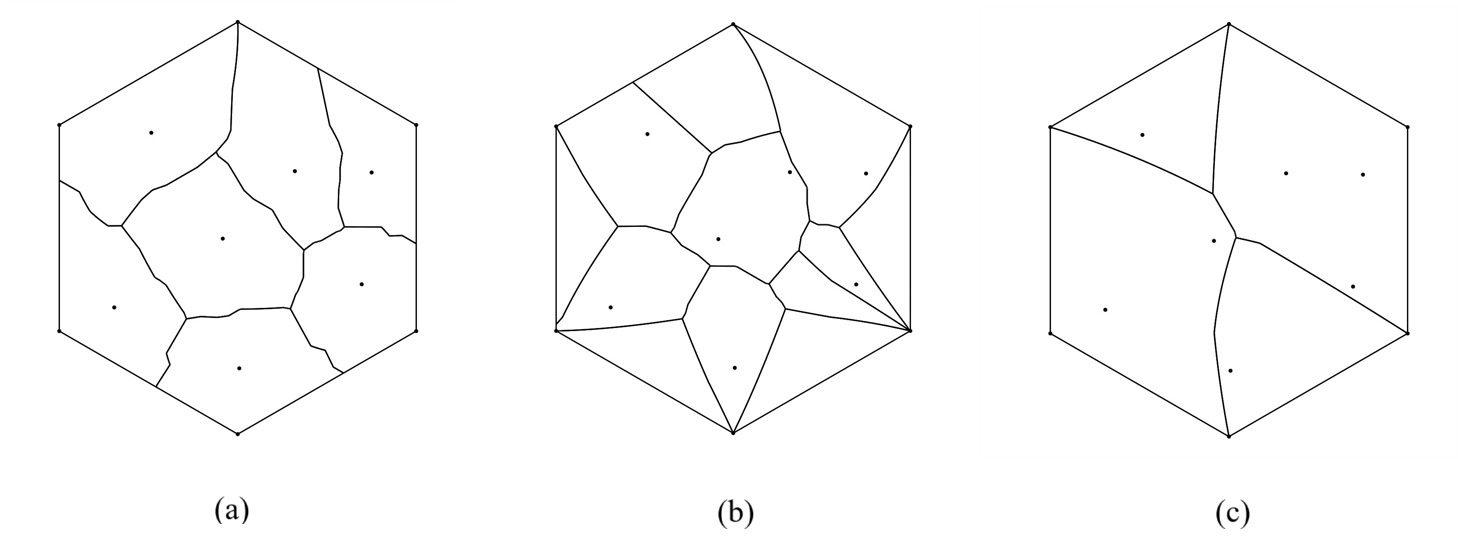}
  \caption{(a) First order Hilbert Voronoi (b) second order (c) final ($(n-1)^{th}$) order}
  \label{fig:4.5Parallel}
\end{figure}

\begin{lemma} \label{lem:vertex}
The circumcenter of any three sites serves as a Voronoi vertex in both the $k^{th}$ and $(k+1)^{th}$ order Hilbert Voronoi diagrams for some $0<k<n-1$.
\end{lemma}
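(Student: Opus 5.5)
We use the classical ball argument of Lee \cite{lee1982k}, adapted to Hilbert balls. Let $p,q,r\in S$ have a circumcenter $c$. That is, $c\in\Omega$ is equidistant in the Hilbert metric $d_H$ from $p$, $q$, and $r$, with common distance $\rho$. Let $B=B_H(c,\rho)$ be the closed Hilbert ball, a convex polygon in $\Omega$ with $p,q,r$ on its boundary. Let $j$ be the number of sites lying strictly inside $B$. We assume general position: no fourth site lies on $\partial B$, and $c$ is a proper vertex, meaning the bisectors of $p,q,r$ actually cross at $c$ rather than overlap in a two-dimensional region. We will show that $c$ is a vertex of the order-$(j+1)$ and order-$(j+2)$ diagrams, so the statement holds with $k=j+1$.

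\textbf{Local structure near $c$.} Take a small neighborhood $U$ of $c$. By general position, for every $x\in U$ the $j$ interior sites remain strictly closer to $x$ than $p,q,r$, and the remaining $n-j-3$ sites remain strictly farther. This uses continuity of $d_H$ in the interior of $\Omega$. Hence the distance ranking of $S$ from $x$ begins with the interior set $I$, which has $j$ sites. It then continues with some permutation of $\{p,q,r\}$ at ranks $j+1,j+2,j+3$, and all other sites come after.

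The three Hilbert bisectors $\beta(p,q)$, $\beta(q,r)$, $\beta(p,r)$ pass through $c$. Near $c$ they divide $U$ into six sectors, one for each ordering of $p,q,r$ by distance. Establishing this six-sector picture is the main technical step. In the Hilbert metric, bisectors are piecewise conic curves rather than lines, so we must check that each bisector actually crosses at $c$ and does not merely touch. For this we rely on the known structure of Hilbert bisectors as simple curves that separate their two sites \cite{gezalyan2023voronoi}. Since the closest of $p,q,r$ is determined by which side of each bisector $x$ lies on, each of the three possible choices of nearest site among $p,q,r$ is realized in two adjacent sectors. The same holds for the farthest of the three.

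\textbf{Order $j+1$.} In sectors where the nearest of $\{p,q,r\}$ is $p$, the set of $j+1$ nearest sites is $I\cup\{p\}$. So $U$ meets three distinct order-$(j+1)$ cells, namely $I\cup\{p\}$, $I\cup\{q\}$, and $I\cup\{r\}$, and all three meet at $c$. Therefore $c$ is a vertex of the order-$(j+1)$ diagram.

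\textbf{Order $j+2$.} Here the set of $j+2$ nearest sites is $I\cup(\{p,q,r\}\setminus\{s\})$, where $s$ is the farthest of the three. As noted above, each of $p,q,r$ is farthest in two adjacent sectors, so again three distinct cells meet at $c$. Therefore $c$ is also a vertex of the order-$(j+2)$ diagram.

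\textbf{Range of $k$.} Since $j\ge 0$ and $j+3\le n$, we get $k=j+1$ with $1\le k\le n-2$, that is, $0<k<n-1$. This completes the plan.

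The main obstacle is the local six-sector claim in the Hilbert setting. Hilbert balls are polygons, so bisectors can contain degenerate pieces. We would handle this with general position, assuming that no bisector segment is shared between pairs of sites and that $c$ lies on no such degenerate piece.
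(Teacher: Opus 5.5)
Your proposal is correct and is essentially the paper's argument. The paper likewise takes the Hilbert ball through the three sites with $k-1$ sites inside (your $j$) and shows the circumcenter is a vertex of orders $k=j+1$ and $k+1=j+2$, though with far less detail than you give. The one difference is that the paper gets the three distinct cells by perturbing the ball directly, either toward one site (changing the $k$th closest) or along a bisector (changing the $(k+1)$st closest), so it never needs your full six-sector picture of the bisectors at $c$, which is the step you flagged as the main obstacle.
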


\begin{proof}
Consider the circumcenter of three sites $s_1,s_2,s_3$ lying on the boundary of ball $B$, with $k-1$ sites of $S$ in the ball. Perturbing $B$ towards one of the sites will change which of $s_1,s_2,s_3$ is the $k^{th}$ closest site to the center of $B$. Perturbing and growing the ball along one of the bisectors to include two sites will change which site is the $(k+1)^{st}$ closest.
\end{proof}

\begin{lemma}\label{singleusage}
Every bisector path between circumcenters belongs to exactly one $k^{th}$ order cell.
\end{lemma}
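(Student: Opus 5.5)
We interpret the statement as follows. Take a portion of the bisector between two sites $s_i, s_j$, bounded by consecutive circumcenters (or by $\bdOmega$). This open path is an edge of the $k^{th}$ order Hilbert Voronoi diagram for exactly one value of $k$. The plan is to use the same ball technique as in Lemma~\ref{lem:vertex}, following \cite{lee1982k}.

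For a point $p$ on the bisector of $s_i$ and $s_j$, let $B_p$ be the Hilbert ball centered at $p$ whose boundary passes through $s_i$ and $s_j$. Define $c(p)$ as the number of sites of $S$ lying strictly inside $B_p$.

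\textbf{Step 1: $c(p)$ is constant along the path.} The key claim is that as $p$ moves continuously along the bisector between two consecutive circumcenters, the value $c(p)$ cannot change. A change would require some third site $s_\ell$ to cross $\partial B_p$. At that moment $s_\ell$ is equidistant from $p$ together with $s_i$ and $s_j$, so $p$ is a circumcenter of $s_i,s_j,s_\ell$. This contradicts the assumption that the open path contains no circumcenter. To make this rigorous, I would use that the Hilbert distance $d_H(p,s)$ is continuous in $p$ on the interior of $\Omega$. Then the set of points where $d_H(p,s_\ell) < d_H(p,s_i)$ is open, and the set where the two are equal is exactly the (closed) bisector. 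Its intersection with the bisector of $s_i,s_j$ consists of circumcenters. Call the constant value $c$.

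\textbf{Step 2: the path is an edge of order $c+1$.} For every $p$ on the path, the $c$ sites inside $B_p$ are the $c$ nearest sites to $p$. The next nearest site is tied between $s_i$ and $s_j$. Let $T$ be the set of interior sites. A small perturbation of $p$ off the bisector toward $s_i$ makes $T\cup\{s_i\}$ the set of $c+1$ nearest sites; a perturbation toward $s_j$ makes it $T\cup\{s_j\}$. Hence the path separates two distinct $(c+1)^{th}$ order cells and is an edge of that diagram.

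\textbf{Step 3: the path is an edge of no other order.} For $k<c+1$, the $k$ nearest sites to $p$ all lie strictly inside $B_p$. Both $s_i$ and $s_j$ are excluded on both sides of the bisector, so the $k$-set does not change across the path. For $k>c+1$, the $k$ nearest sites include both $s_i$ and $s_j$ on both sides, so the set again does not change. In either case no cell boundary crosses the path, which gives uniqueness.

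\textbf{Main obstacle.} I expect the hardest part to be the tie issue in Step 3. When $k>c+1$, the $k$-set might itself be ambiguous because of ties with other sites. I would argue this happens only on the bisectors of other pairs. Generically these meet our path only at circumcenters, i.e.\ at its endpoints, under the general position assumption that no four sites are cocircular in the Hilbert sense. A secondary point is the degenerate case in which Hilbert bisectors contain two-dimensional pieces. I would assume these away by general position, as is standard in prior Hilbert Voronoi work.
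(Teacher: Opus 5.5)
Your proof is correct and follows the paper's argument: the number of sites strictly inside the ball through $s_i,s_j$ stays fixed along the bisector until a circumcenter is reached, and perturbing off the bisector adds $s_i$ or $s_j$ to the $(c+1)$-set, with your Step 3 making explicit the uniqueness the paper leaves implicit. One correction to your closing remark: the bisector of a pair $s_a,s_b$ disjoint from $\{s_i,s_j\}$ generically does cross the open path at isolated non-circumcenter points, where only $d_H(p,s_a)=d_H(p,s_b)$ and $d_H(p,s_i)=d_H(p,s_j)$ hold, so no-four-cocircular general position does not rule these out; this is harmless, because at such points the path only crosses an edge of another order rather than running along one, so Step 3's conclusion still holds.
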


\begin{proof}
Consider the bisector between two sites $s_1, s_2 \in S$ lying on a $k^{th}$-order cell. A ball moving along the bisector and passing through both sites contains exactly $k-1$ other sites before reaching a circumcenter with a third site. Perturbing the ball off the bisector causes it to include either $s_1$ or $s_2$, shifting the center into the $k^{{th}}$ order cell of one or the other.
\end{proof}

Our algorithm proceeds in three phases: Preprocessing computes all pairwise bisectors and all three site circumcenters parametrized along their respective bisectors. Looping over the bisectors lets us determine the diagram containment of the first bisector portion by checking the distance to each site. Traversing the bisector and updating diagram containment at each circumcenter yields a labeling of each bisector portion by the degree-order Hilbert Voronoi diagram it is part of, allowing us to display it quickly based on user input.

In the following we implicitly use three facts from \cite{gezalyan2023delaunay,gezalyan2023voronoi}; first, we can order the boundary of $\Omega$ to calculate Hilbert distances in time $O(\log m)$, second we can order the boundary of $\Omega$ to calculate circumcenters in time $O(\log^3 m)$, and third, bisectors have complexity $O(m)$.

\begin{corollary}
Our algorithm takes time $O(n^2 m \log m + n^3 \log^3 m + n^3 \log n \log m)$ with space $O(m\log m + n^3 + n^2 m)$. 
\end{corollary}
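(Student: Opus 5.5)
The bound follows from costing the three phases of the algorithm separately and adding them up, using the three facts just quoted from \cite{gezalyan2023delaunay,gezalyan2023voronoi}.

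\textbf{Preprocessing.} First we order the boundary of $\Omega$ once; this is the source of the $O(m\log m)$ term in the space bound. Next we compute all $\binom{n}{2}=O(n^2)$ pairwise bisectors. Each has complexity $O(m)$, and I expect each can be traced in $O(m\log m)$ time, walking its $O(m)$ pieces and locating each breakpoint with an $O(\log m)$ query. This gives $O(n^2 m\log m)$ time and $O(n^2 m)$ space. Then we compute all $\binom{n}{3}=O(n^3)$ circumcenters, at $O(\log^3 m)$ each, for $O(n^3\log^3 m)$ time and $O(n^3)$ space. Each circumcenter of $s_i,s_j,s_k$ is attached to the three bisectors of the pairs among $s_i,s_j,s_k$. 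It is parametrized there by its position along the bisector, found by binary search over the $O(m)$ pieces, at $O(\log m)$ per attachment. Every bisector thus receives $O(n)$ circumcenters, which we sort by parameter. Over all $O(n^2)$ bisectors this costs $O(n^2\cdot n\log n)=O(n^3\log n)$ comparisons. Each comparison may require evaluating a position along a polygonal path, costing $O(\log m)$, so the total is $O(n^3\log n\log m)$. This is the third time term.

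\textbf{Initial labeling.} For each bisector we take a point on its first portion and compute its distance to all $n$ sites, at $O(\log m)$ per distance. Counting the sites strictly closer than $s_i,s_j$ gives the order of that portion. The total is $O(n^3\log m)$, which is dominated by the $O(n^3\log n\log m)$ term.

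\textbf{Traversal.} Along each bisector we step through its sorted circumcenters. By Lemma~\ref{lem:vertex}, each circumcenter is a vertex where the order changes by exactly one. The third site either enters or leaves the ball centered at the current point, depending on which side of the bisector it lies. So the label is updated by $\pm1$ in $O(1)$ time, or $O(\log m)$ if a side test is needed. Lemma~\ref{singleusage} guarantees that each bisector portion between consecutive circumcenters receives a single well-defined order label. The traversal totals $O(n^3)$ or $O(n^3\log m)$, again dominated. The labels need $O(n^3)$ space, because there are $O(n^3)$ portions. Summing the phases gives the stated time and space bounds.

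The main obstacle is the claim that sorting the circumcenters along a bisector costs only $O(\log m)$ per comparison. The bisectors are piecewise conic, not straight lines, so the argument needs a monotone parametrization. I would use the index of the bisector piece together with a local parameter on that piece, both computed once per circumcenter during attachment. Comparisons then become $O(1)$ after that $O(\log m)$ precomputation, which makes the $O(n^3\log n\log m)$ bound safe. The side test at each circumcenter must also be justified: which direction moves the third site in or out of the ball. I would settle this with a single distance comparison at a point just past the circumcenter.
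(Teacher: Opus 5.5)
Your proposal is correct and uses the same three-phase accounting as the paper: bisectors in $O(n^2m\log m)$, circumcenters in $O(n^3\log^3 m)$, per-bisector sorting, initial labeling, and an $O(n)$-per-bisector traversal, with the same space bookkeeping. The only difference is where the $n^3\log n\log m$ term comes from. The paper spends $O(n\log n\log m)$ per bisector on the initial containment and only $O(n^3\log n)$ on sorting, whereas you charge it to sorting with $O(\log m)$-cost comparisons and label the first portion by counting in $O(n\log m)$; with your precomputed piece-index parameters, your argument actually gives the slightly sharper $O(n^2m\log m + n^3\log^3 m + n^3\log n)$.
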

\begin{proof}
Preprocessing takes $O(n^2m \log m)$ time to compute all bisectors and $O(n^3 \log^3 m)$ time for computing all circumcenters (which we bucket by bisector). Sorting the circumcenters along the bisectors in the buckets takes $O(n^3 \log n)$. 

We determine the containment of the first bisector component in $O(n\log n \log m)$ time per bisector. Then, we traverse the bisector, determining the containment of each vertex and each bisector segment in $O(n)$ time, since there are $O(n)$ circumcenters along each bisector. Since we do this for all $n^2$ bisectors, this takes $O(n^2 (n\log n \log m + n))$. Our final runtime is $O(n^2 m \log m + n^3 \log^3 m + n^3 \log n \log m)$.

The storage consists of $O(m\log m)$ to order the boundary, $O(n^3)$ for circumcenters ordered along bisectors, and $O(n^2m)$ for the bisectors, giving us $O(m\log m + n^3 + n^2 m)$ storage.
\end{proof}

\begin{observation}
In the Hilbert metric, $k^{th}$ order Voronoi cells are not always star-shaped, a star-shaped region has an interior point from which all others are visible via a segment.
\end{observation}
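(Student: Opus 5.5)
This is a counterexample statement, so the plan is to exhibit one explicit configuration (a convex polygon $\Omega$, a site set $S$, an order $k$) and certify that some $k^{th}$ order cell has no kernel point. A region is star-shaped only if its kernel, the set of points seeing the whole region, is nonempty. So it is enough to find two points $p,q$ of the cell with disjoint visibility sets inside the cell. Equivalently, find two points whose straight-line connections to a common point must leave the cell.

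\textbf{Construction.} The natural source of non-convexity is the Hilbert bisector itself. From the facts of \cite{gezalyan2023voronoi} used above, a bisector is a piecewise conic curve of complexity $O(m)$, and it can bend noticeably when the two sites lie near the boundary of a thin or skewed polygon. I would choose $\Omega$ as a long, thin, asymmetric quadrilateral or triangle. I would place two sites $s_1,s_2$ close to $\bdOmega$ so that their bisector has a pronounced concave bend, which alone makes a first-order cell non-convex. Already for $k=1$ a bend does not break star-shapedness, because first-order Hilbert Voronoi cells are star-shaped with respect to their site: Hilbert geodesics include straight segments, and the standard argument applies. So I would go to $k=2$, where there is no distinguished site in the cell. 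The target is a second-order cell $V(\{s_1,s_2\})$ bounded by several curved bisector arcs, say $b(s_1,s_3)$ and $b(s_2,s_4)$ on opposite sides, each bulging inward. This should produce two ``pockets'' of the cell that no single point can see simultaneously. Concretely, I would aim for a cell shaped like a thin bent strip or a U, with the two arms occluded from each other by an inward-bulging arc.

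\textbf{Verification steps.} In order, I would:
\begin{enumerate}
\item Fix explicit coordinates, using the software's brute-force Voronoi mode to tune them.
\item Compute the relevant bisectors from the cross-ratio formula for Hilbert distance.
\item Identify points $p,q$ deep in the two pockets and a chord $c$ of the cell's boundary separating them.
\item Show that every point $x$ seeing $p$ lies in a region $R_p$, every point seeing $q$ lies in $R_q$, and $R_p\cap R_q$ does not meet the cell.
\end{enumerate}
For step 4, each visibility test along a segment reduces to checking the sign of $d_H(x,s_i)-d_H(x,s_j)$ along the segment. Within a fixed triple of $\Omega$'s edges these are explicit rational expressions, so finitely many sign checks suffice.

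\textbf{Main obstacle.} The hard step is producing a provable certificate of non-star-shapedness rather than merely a picture. Hilbert bisectors are only mildly curved in most configurations, so the inward bulge has to be made large enough to block visibility, and that likely requires sites very near $\bdOmega$. I would first try an extreme configuration in which bisectors degenerate toward polygonal paths with sharp corners, since the behaviour near a vertex of $\Omega$ follows the Funk/Thompson-like piecewise linear regime. In that regime the cell is essentially a polygon, and non-star-shapedness can be checked combinatorially by intersecting the half-planes of its reflex edges. I would then argue by continuity that a small perturbation back into the interior preserves the empty kernel.
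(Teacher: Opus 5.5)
\textbf{Gap: the proposal never produces a counterexample.} The statement is existential, so its proof \emph{is} a concrete configuration plus a check that one of its cells has empty kernel. Your proposal describes how you would search for such a configuration, but it never produces one. The load-bearing sentences (``I would choose $\Omega$ as a long, thin, asymmetric quadrilateral'', ``this should produce two pockets'') are hopes rather than verified facts, so as written nothing is established.

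The paper supplies exactly the missing object. $\Omega$ is the square with corners $(100,100)$ and $(300,300)$, and the sites are $s_1=(160,284.9)$, $s_2=(140,170)$, $s_3=(130,165)$, $s_4=(180,285)$. The second-order cell of $\{s_1,s_2\}$ is exhibited, via the software, as non-star-shaped. No thin or skewed domain is needed: the configuration just gives each site of the cell a close competitor ($s_4$ beside $s_1$ near the top edge, $s_3$ beside $s_2$). Your correct remark about first-order cells explains why this mechanism works. The same triangle-inequality argument shows that $\{x : d_H(x,s_i)\le d_H(x,s_j)\}$ is star-shaped about $s_i$. Hence $V(\{s_1,s_2\})$ is the intersection of a set star-shaped about $s_1$ with one star-shaped about $s_2$, and such an intersection can fail to be star-shaped once the bounding bisectors curve.

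\textbf{Your route to a rigorous certificate rests on an unsupported claim.} You assert that near a vertex the bisectors degenerate to polygonal paths in a ``Funk/Thompson-like piecewise linear regime''. But the Hilbert distance is, up to a factor, the sum of the forward and reverse Funk distances. Each of those alone has piecewise linear bisectors in a polygon, and it is their sum that produces the conic pieces. Moving sites toward $\partial\Omega$ leaves both terms genuinely dependent on $x$, so you do not land in a polygonal regime. Consequently the reflex-edge half-plane test does not apply. Moreover, the limiting diagram your continuity step would perturb from is not even defined without renormalizing the diverging distances.

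Your Steps 1--4 (explicit bisectors, two points in separate pockets, disjoint visibility regions) would give a stronger certificate than the paper's picture-based check. They only count as a proof once they are actually run on a specific configuration such as the one above.
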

\begin{proof}
 This can be seen with the following $\Omega$:  a square with coordinates from (100, 100) to (300, 300) and sites at (160, 284.9), (140, 170), (130, 165), (180, 285). The second-order Voronoi generates a non-star-shaped region. This configuration is in general position.
\end{proof}

\begin{figure}[h]
    \centering
    \includegraphics[scale=.32]{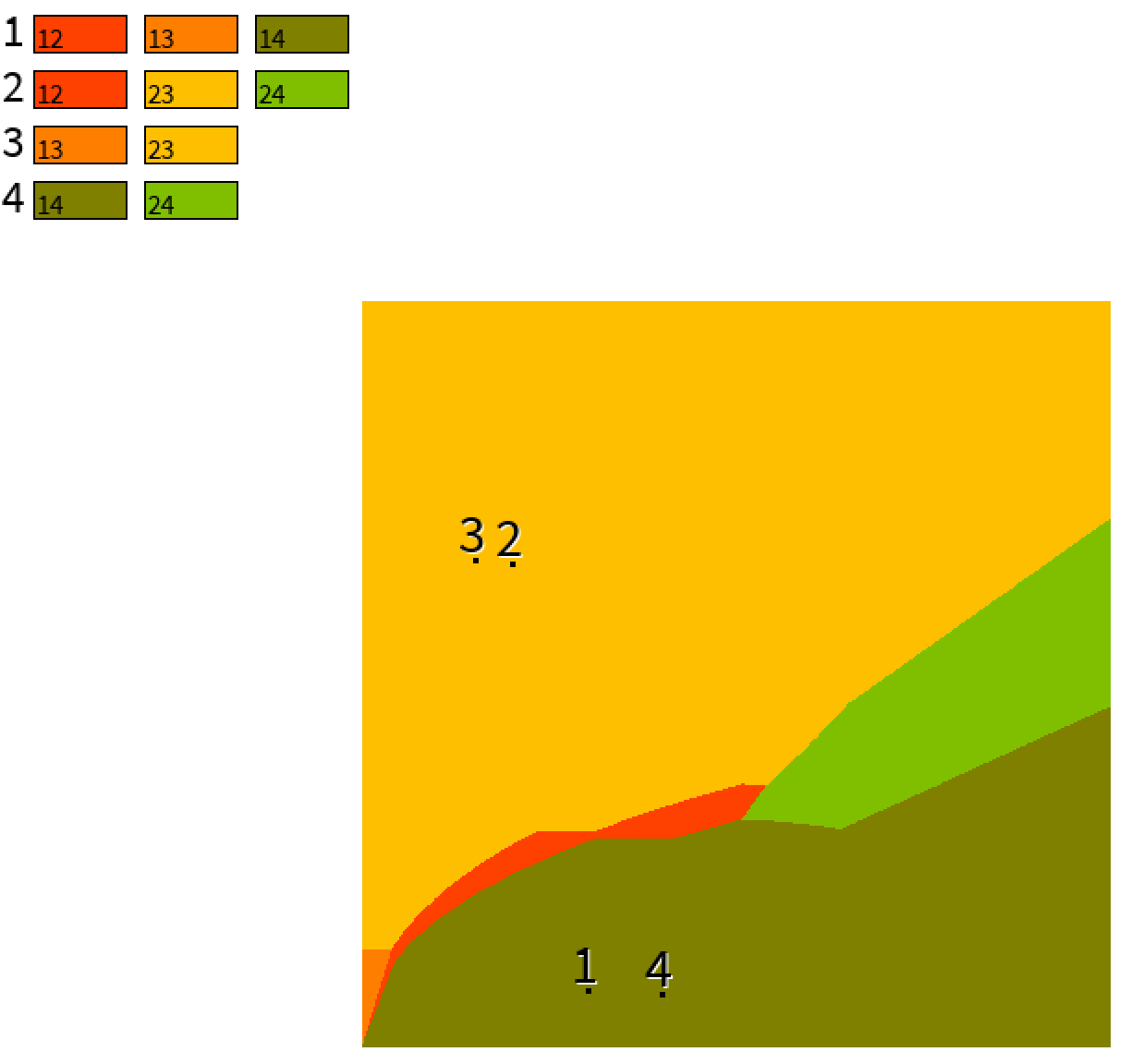}
    \caption{The red region, the Voronoi cell closest to 1 and 2, is not a star-shaped region.}
\end{figure}

We leverage our results on $k^{th}$-order cells to compute the Delaunay mosaic by computing a modified Fréchet mean (a point that minimizes the sum of distances) of the sites of each cell, and connecting them if their cells are adjacent.

\section{Overlap and Outer Region} \label{sec:overlap}
One notable property of the Hilbert metric is that three sites do not always have a circumcenter \cite{bumpus2023software}. First we recall what it means to be a ball at infinity through two sites \cite{gezalyan2023delaunay}. Then we define the overlap and outer regions which characterize circumcenter existence.

\begin{definition}[Infinite Ball]
    Given two sites $s_1, s_2 \in S$, consider balls passing through them with centers along the bisector parametrized by $t \in [0,1]$. This yields two limit balls on the boundary of $\Omega$: $B_0(p \ST q)$ at $t = 0$ and $B_1(p \ST q)$ at $t = 1$. (See Figure~\ref{fig:infinite}.)
\end{definition}

\begin{figure}[h]
    \centering
    \includegraphics[scale=.28]{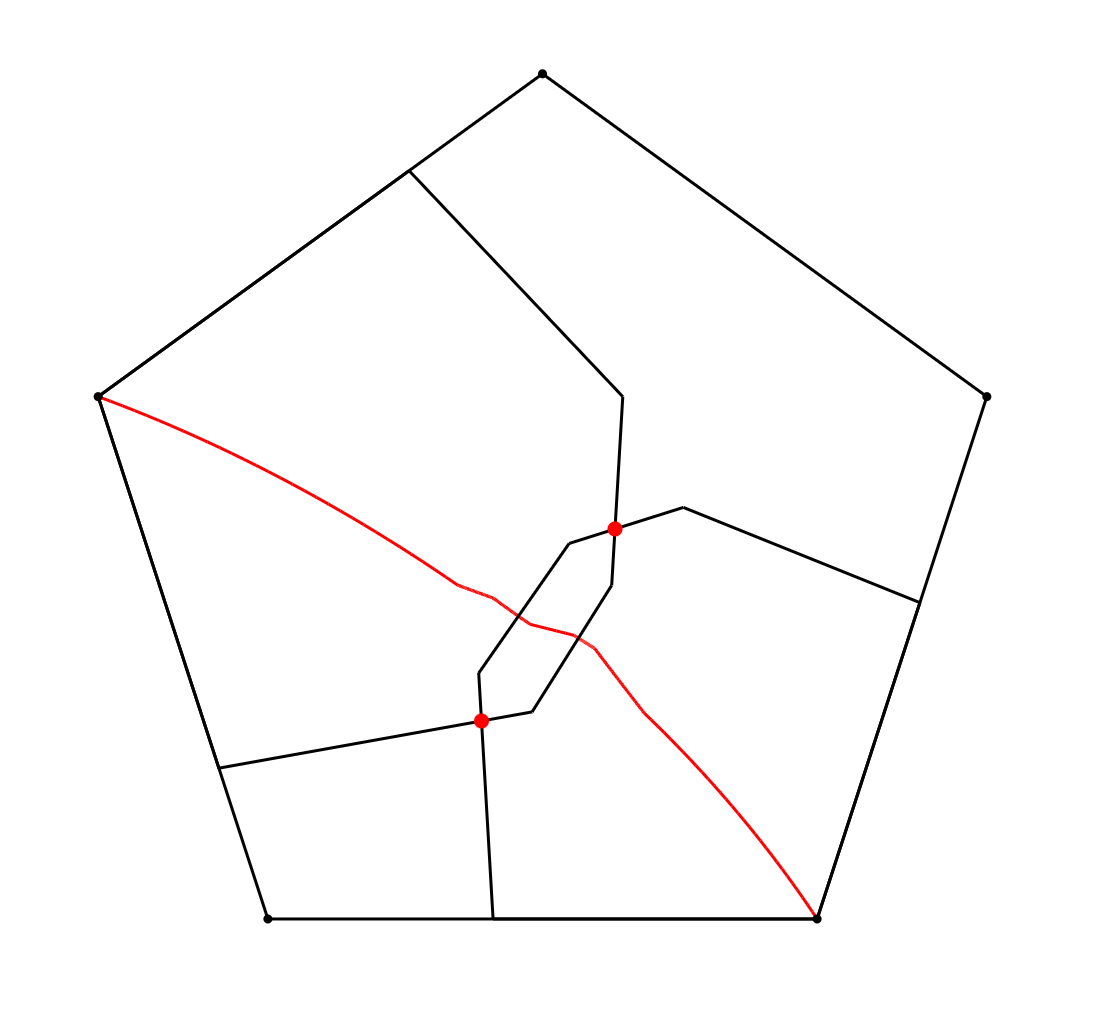}
    \caption{Two infinite balls along bisector through two sites (in red) generated by our software.}
    \label{fig:infinite}
\end{figure}

\begin{definition}[Overlap Region]
    Given two sites $s_1,s_2\in S$, their overlap region, denoted $Z(p,q)$, is $B_0(p \ST q)\cap B_1(p \ST q)$.
\end{definition}

\begin{definition}[Outer Region]
    Given two sites $s_1,s_2\in S$, their outer region, denoted $W(p,q)$, is $\interior \Omega - \left(B_0(p \ST q)\cup B_1(p \ST q)\right)$.
\end{definition}

The \emph{overlap} is the region between two sites $s_1$, $s_2$ where there is no circumcenter between $s_1$, $s_2$, and $p$ \cite{gezalyan2023delaunay}. The \emph{outer region} is the region that is not between the two sites where there is no circumcenter. In our software, we give users the option to display these regions dynamically. Users can generate these regions for any two sites, then move a third site between them to observe the behavior of the bisectors of the three sites.

\section{Clustering}
In 2019, Nielsen and Sun introduced clustering in the Hilbert simplex \cite{nielsen2019clustering}. We expand on this by allowing users to interactively visualize clustering on arbitrary Hilbert polygonal geometries. We provide $k$-means and single linkage clustering (see Figure~\ref{fig:clusters}).

\begin{figure}[h]
\centering
\includegraphics[scale=.45]{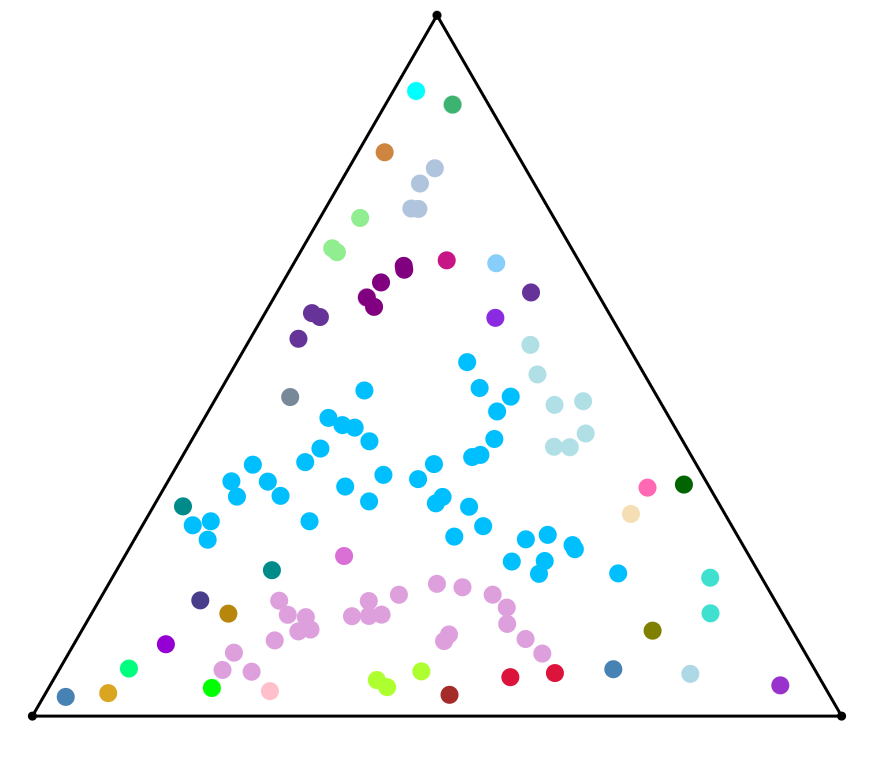}
  \caption{120 points in 38 clusters using single link clustering.}
  \label{fig:clusters}
\end{figure}

\section{Conclusion}
We have introduced the first software for visualizing higher-order Voronoi diagrams, Delaunay mosaics, clustering, overlap and outer regions in arbitrary convex Hilbert polygonal geometries. 

\bibliography{shortcuts,hilbert}

@inproceedings{abdelkader2018delone,
  author = {Abdelkader, Ahmed and Mount, David M.},
  title = {Economical {Delone} Sets for Approximating Convex Bodies},
  booktitle = SWAT_2018,
  year = {2018},
  pages = {4:1--4:12}, 
  doi = {10.4230/LIPIcs.SWAT.2018.4},
}

@inproceedings{nielsen2017balls,
  title = {On balls in a {Hilbert} polygonal geometry (multimedia contribution)},
  author = {Nielsen, Frank and Shao, Laetitia},
  booktitle = SOCG_2017,
  series = {Leibniz International Proceedings in Informatics (LIPIcs)},
  volume = {77},
  year = {2017},
  pages = {67:1--67:4},
  publisher = {Schloss Dagstuhl--Leibniz-Zentrum f{\"u}r Informatik},
  doi = {10.4230/LIPIcs.SoCG.2017.67},
}

@incollection{nielsen2019clustering,
  title = {Clustering in {Hilbert's} Projective Geometry: {The} Case Studies of the Probability Simplex and the Elliptope of Correlation Matrices},
  author = {Nielsen, Frank and Sun, Ke},
  booktitle = {Geometric Structures of Information},
  editor = {Nielsen, Frank},
  publisher = {Springer Internat.\ Pub.},
  year = {2019},
  pages = {297--331},
  doi = {10.1007/978-3-030-02520-5_11},
}

@article{banerjee2024heine,
  title={On The Heine-Borel Property and Minimum Enclosing Balls},
  author={Banerjee, Hridhaan and Day, Carmen Isabel and Hunleth, Megan and Hwang, Sarah and Gezalyan, Auguste H and Golovatskaia, Olya and Parepally, Nithin and Wang, Lucy and Mount, David M},
  journal={arXiv preprint arXiv:2412.17138},
  year={2024}
}

@inproceedings{nielsen2023non,
  title={Non-linear embeddings in {Hilbert} simplex geometry},
  author={Nielsen, Frank and Sun, Ke},
  booktitle={Topological, Algebraic and Geometric Learning Workshops 2023},
  pages={254--266},
  year={2023},
  organization={PMLR}
}

@inproceedings{abdelkader2024convex,
  title={Convex Approximation and the {Hilbert} Geometry},
  author={Abdelkader, Ahmed and Mount, David M.},
  booktitle={2024 Symposium on Simplicity in Algorithms (SOSA)},
  pages={286--298},
  year={2024},
  organization={SIAM},
  doi = {10.1137/1.9781611977936.26},
}

@inproceedings{gezalyan2023voronoi,
  title={Voronoi Diagrams in the {Hilbert} Metric},
  author={Gezalyan, Auguste H. and Mount, David M.},
  booktitle={39th International Symposium on Computational Geometry (SoCG 2023)},
  year={2023},
  organization={Schloss Dagstuhl-Leibniz-Zentrum f{\"u}r Informatik},
  doi={10.4230/LIPIcs.SoCG.2023.35},
}

@article{bumpus2023software,
  title={Analysis of Dynamic Voronoi Diagrams in the Hilbert Metric},
  author={Bumpus, Madeline and Dai, Caesar and Gezalyan, Auguste H. and Munoz, Sam and Santhoshkumar, Renita and Ye, Songyu and Mount, David M.},
  journal={Proceedings of the 35th Canadian Conference on Computational
Geometry (CCCG 2023)
Montreal, Canada},
  year={2023}
}

@InProceedings{gezalyan2023delaunay,
  author =	{Gezalyan, Auguste H. and Kim, Soo H. and Lopez, Carlos and Skora, Daniel and Stefankovic, Zofia and Mount, David M.},
  title =	{{Delaunay Triangulations in the Hilbert Metric}},
  booktitle =	{19th Scandinavian Symposium and Workshops on Algorithm Theory (SWAT 2024)},
  pages =	{25:1--25:17},
  series =	{Leibniz International Proceedings in Informatics (LIPIcs)},
  ISBN =	{978-3-95977-318-8},
  ISSN =	{1868-8969},
  year =	{2024},
  volume =	{294},
  editor =	{Bodlaender, Hans L.},
  publisher =	{Schloss Dagstuhl -- Leibniz-Zentrum f{\"u}r Informatik},
  address =	{Dagstuhl, Germany},
  URL =		{https://drops.dagstuhl.de/entities/document/10.4230/LIPIcs.SWAT.2024.25},
  URN =		{urn:nbn:de:0030-drops-200657},
  doi =		{10.4230/LIPIcs.SWAT.2024.25}
}

@misc{lemmens2014birkhoff, title={Birkhoff’s version of Hilbert’s metric and its applications in analysis}, url={http://dx.doi.org/10.4171/147-1/10}, DOI={10.4171/147-1/10}, journal={IRMA Lectures in Mathematics and Theoretical Physics}, publisher={EMS Press}, author={Lemmens, Bas and Nussbaum, Roger D.}, year={2014}, month=dec, pages={275–303}, language={en} }

@inproceedings{song2025farthest,
  author    = {Minju Song, Mook Kwon Jung and Hee-Kap Ah},
  title     = {Farthest-point Voronoi Diagrams in the Hilbert Metric},
  booktitle = {Proceedings of the 19th Algorithms and Data Structures Symposium (WADS 2025)},
  year      = {2025},
  location  = {York University, Toronto, Canada},
}

@misc{banerjeesoftware2025,
   title = {{Software For the Thompson and Funk Geometry}}, 
   author = {Banerjee, Hridhaan and Day, Carmen Isabel and Gezalyan, Auguste H. and Golovatskaia, Olga and Hunleth, Megan and Hwang, Sarah and Parepally, Nithin and Wang, Lucy and Mount, David M.},
   url = {https://github.com/nithin1527/funk-geo-visualizer},
   doi = {10.4230/artifacts.23291},
}

@article{BOHLER2015539,
title = {On the complexity of higher order abstract Voronoi diagrams},
journal = {Computational Geometry},
volume = {48},
number = {8},
pages = {539-551},
year = {2015},
issn = {0925-7721},
doi = {https://doi.org/10.1016/j.comgeo.2015.04.008},
url = {https://www.sciencedirect.com/science/article/pii/S0925772115000346},
author = {Cecilia Bohler and Panagiotis Cheilaris and Rolf Klein and Chih-Hung Liu and Evanthia Papadopoulou and Maksym Zavershynskyi},
}

@article{bohler2019efficient,
  title={An efficient randomized algorithm for higher-order abstract Voronoi diagrams},
  author={Bohler, Cecilia and Klein, Rolf and Liu, Chih-Hung},
  journal={Algorithmica},
  volume={81},
  number={6},
  pages={2317--2345},
  year={2019},
  publisher={Springer}
}

@article{claverol2024edge,
  title={The edge labeling of higher order Voronoi diagrams},
  author={Claverol, Merc{\`e} and de las Heras Parrilla, Andrea and Huemer, Clemens and Mart{\'\i}nez-Moraian, Alejandra},
  journal={Journal of Global Optimization},
  volume={90},
  number={2},
  pages={515--549},
  year={2024},
  publisher={Springer}
}

@article{lee1982k,
  title={On k-nearest neighbor Voronoi diagrams in the plane},
  author={Lee, Der-Tsai},
  journal={IEEE transactions on computers},
  volume={100},
  number={6},
  pages={478--487},
  year={1982},
  publisher={IEEE}
}

@string{SOCG_2017 = "Proc.\ 33rd Internat.\ Sympos.\ Comput.\ Geom."}

@string{SWAT_2018 = "Proc.\ 16th Scand.\ Workshop Algorithm Theory"}
\end{document}